\title{Expander Graphs are Non-Malleable Codes}
\author{Peter M.\ R.\ Rasmussen\thanks{University of Copenhagen and Basic Algorithms Research Copenhagen. \texttt{pmrr@di.ku.dk}.} \and Amit Sahai\thanks{UCLA. \texttt{sahai@cs.ucla.edu}}}
\date{\today}
\newcommand{\abs}[1]{\left\vert #1\right\vert}
\newcommand{\N}{\mathbb{N}}
\newcommand{\RR}{\mathcal R}
\newcommand{\M}{\mathcal M}
\newcommand{\F}{\mathcal F}
\newcommand{\LL}{\mathcal L}
\newcommand{\X}{\mathcal X}
\newcommand{\FF}{\mathbb F}
\newcommand{\ceil}[1]{\left\lceil #1 \right\rceil}
\DeclareMathOperator{\enc}{enc}
\DeclareMathOperator{\dec}{dec}
\DeclareMathOperator{\LD}{LD}
\newtheorem{definition}{Definition}
\newtheorem{notation}{Notation}
\newtheorem{theorem}[definition]{Theorem}
\newtheorem{proposition}[definition]{Proposition}
\begin{document}
	\maketitle

	\abstract{Any $d$-regular graph on $n$ vertices with spectral expansion $\lambda$ satisfying $n = \Omega(d^3\log(d)/\lambda)$ yields a $O\left(\frac{\lambda^{3/2}}{d}\right)$-non-malleable code for single-bit messages in the split-state model.}

	\section{Introduction}

	A split-state non-malleable code~\cite{NM} for single-bit messages consists of randomized encoding and decoding algorithms $(\enc, \dec)$. A message $m \in \{0,1\}$ is encoded as a pair of strings $(L,R) \in 
	\{0,1\}^k \times \{0,1\}^k$, such that $\dec(L,R) = m$. An adversary then specifies an arbitrary pair of functions $g,h: \{0,1\}^k \rightarrow \{0,1\}^k$. The code is said to be non-malleable if, intuitively, the message obtained as $\dec(g(L), h(R))$ is ``unrelated'' to the original message $m$. In particular, to be $\varepsilon$-non-malleable, it is enough~\cite{First} to guarantee that when the message $m$ is chosen uniformly at random and encoded into $(L,R)$,  the probability that $\dec(g(L), h(R)) = 1-m$ is at most $\frac12 + \varepsilon$. Since their introduction in 2010~\cite{NM}, split-state non-malleable codes have been the subject of intense study within theoretical computer science~\cite{NM,First,Aggarwal,ChaZuck,TamperedExtensions,Li}.

\vspace{1ex}
	In this work, we show that expander graphs immediately give rise to  split-state non-malleable codes for single-bit messages. 
	Specifically, we show that any $d$-regular graph on $n=2^k$ nodes with spectral expansion $\lambda$ satisfying $n = \Omega(d^3\log(d)/\lambda)$ yields a $O\left(\frac{\lambda^{3/2}}{d}\right)$-non-malleable code for single-bit messages in the split-state model. Our proof is elementary, requiring a little more than two (fullsize) pages to prove, 
	having at its heart two nested applications of the Expander Mixing Lemma.
	Furthermore, we only need expanders of high degree (e.g.,~$d = n^\varepsilon$), which can be constructed and analyzed easily (see, e.g.,~\cite{Luca-Blog} or the appendix), yielding $2^{-\Omega(k)}$-non-malleable codes.
	\paragraph{Comparison with Previous Work.}
	Until our work, all known proofs of security for explicit split-state non-malleable codes have required complex mathematical proofs, and all known such proofs either directly or indirectly used the mathematics behind constructions of two-source extractors~\cite{First,Aggarwal,ChaZuck,TamperedExtensions,Li}.  In fact, after constructing the first non-malleable code in the split-state model Dziembowski, Kazana, and Obremski wrote: ``This brings a natural question if we could show some relationship between the extractors and the non-malleable codes in the split-state model. Unfortunately, there is no obvious way of formalizing the conjecture that non-malleable codes need to be based on extractors''~\cite{First}. We thus simultaneously find the first simple, elementary solution to the problem of designing single-bit non-malleable codes (our proof being approximately one-third the length of the proof of security of the single-bit non-malleable code of~\cite{First}) and answer in the negative the implicit conjecture of~\cite{First}; it is not necessary to base constructions of non-malleable codes on the theory of extractors.
	
Our construction of non-malleable codes from expander graphs thus opens up a new line of attack in the study of split-state non-malleable codes. It is important to keep in mind that current constructions of non-malleable codes supporting messages of arbitrary length use many ideas pioneered in the construction of~\cite{First}, in particular the use of extractors. While we do not yet know how to generalize our results beyond single-bit messages, we speculate that further investigation building upon our work will reveal a deeper connection and more powerful simple constructions based on expanders.
	
	It should be noted that two-source extractors are well-known to exhibit expansion properties; however, in all previous proofs, much more than mere expansion was used to argue non-malleability. Indeed previous proofs apply extractors repeatedly; for instance the proof of~\cite{First} uses the extractor property several times (e.g., in equation (22) and using equation (43) in~\cite{First}). Previous proofs also highlight the nontriviality and care that is required in applying extractors correctly to yield a valid proof of non-malleability (e.g., the paragraph beginning with ``There are two problems with the above argument.'' found below equation (36) of~\cite{First}). With respect to the expansion properties of two-source extractors, it is not surprising that 1-bit non-malleable codes will have some sort of expansion properties. Our contribution is the converse: that  good expansion is \emph{sufficient} for the construction of non-malleable codes.

	\section{Preliminaries}
	We shall assume familiarity with the basics of codes and non-malleable codes. A cursory 
	review of relevant definitions
	can be found in the appendix.
	\begin{notation}[Graphs]
		A graph $G=(V, E)$ consists of vertices $V$ and edges $E\subset V\times V$. In this exposition every graph is undirected and $n=\abs V$ always denotes the number of vertices of the graph in question.
		\begin{itemize}
			\item For any $v\in V$ we denote by $N(v)$ the set of neighbors of $v$ in $G$.
		\item For any two subsets $S, T\subseteq V$ we denote by $E(S, T)$ the set of (directed) edges from $S$ to $T$ in $G$. I.e.\
			$E(S, T) = \{(v, u)\in S\times T\mid (v, u)\in E \}$.
		\end{itemize}
		\end{notation}
	\begin{definition}[Spectral Expander]
		Let $G=(V, E)$ be a $d$-regular graph, $A_G$ be its adjacency matrix, and $\lambda_1\geq \dots \geq \lambda_n$ be the eigenvalues of $A_G$. We say that $G$ is a $\lambda$ spectral expander if $\lambda \geq \max\{\abs{\lambda_2}, \dots, \abs{\lambda_n}\}$. 
	\end{definition}
		\begin{theorem}[Expander Mixing Lemma]\label{mixing}
		Suppose that $G=(V, E)$ is a $\lambda$ spectral expander. Then for every pair of subsets $S, T\subset V$ we have
		\begin{align*}
			\abs{\abs{E(S, T)}-\frac{d\cdot \abs S\cdot \abs T}{n}}\leq \lambda \sqrt{\abs S\cdot \abs T}.
		\end{align*}
	\end{theorem}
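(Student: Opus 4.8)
The plan is to use the standard linear-algebraic argument: express $\abs{E(S,T)}$ as a quadratic form in the adjacency matrix $A_G$ evaluated on the $0/1$ indicator vectors of $S$ and $T$, expand these vectors in an orthonormal eigenbasis of $A_G$, separate out the contribution of the top eigenvalue $\lambda_1 = d$ (which is exactly the ``expected count'' $d\abs{S}\abs{T}/n$), and bound the remaining spectral mass using the expansion hypothesis together with Cauchy--Schwarz.

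First I would record the identity $\abs{E(S,T)} = \mathbf{1}_S^\top A_G \mathbf{1}_T$, where $\mathbf{1}_S, \mathbf{1}_T \in \R^V$ denote the indicator vectors of $S$ and $T$; this holds because the $(u,v)$ entry of $A_G$ equals $1$ exactly when $(u,v)\in E$, so the double sum over $u\in S$, $v\in T$ counts precisely the directed edges of $E(S,T)$ (any contribution from $S\cap T$ being handled automatically). Since $G$ is $d$-regular, the all-ones vector is an eigenvector of $A_G$ for the eigenvalue $d$; fix an orthonormal eigenbasis $v_1,\dots,v_n$ with $v_1 = \mathbf{1}/\sqrt{n}$ and corresponding eigenvalues $d = \lambda_1 \ge \lambda_2 \ge \dots \ge \lambda_n$, all of absolute value at most $\lambda$ for $i\ge 2$ by the spectral expansion hypothesis.

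Next, write $\mathbf{1}_S = \sum_{i=1}^n \alpha_i v_i$ and $\mathbf{1}_T = \sum_{i=1}^n \beta_i v_i$. The leading coefficients are $\alpha_1 = \langle \mathbf{1}_S, v_1\rangle = \abs{S}/\sqrt{n}$ and $\beta_1 = \abs{T}/\sqrt{n}$, so
\[
	\mathbf{1}_S^\top A_G \mathbf{1}_T \;=\; \sum_{i=1}^n \lambda_i \alpha_i \beta_i \;=\; \frac{d\cdot\abs{S}\cdot\abs{T}}{n} + \sum_{i=2}^n \lambda_i \alpha_i \beta_i .
\]
For the error term I would estimate $\abs{\sum_{i=2}^n \lambda_i \alpha_i \beta_i} \le \lambda \sum_{i=2}^n \abs{\alpha_i}\abs{\beta_i} \le \lambda \big(\sum_{i\ge 2}\alpha_i^2\big)^{1/2}\big(\sum_{i\ge 2}\beta_i^2\big)^{1/2} \le \lambda\,\|\mathbf{1}_S\|_2\,\|\mathbf{1}_T\|_2 = \lambda\sqrt{\abs{S}\cdot\abs{T}}$, using $\abs{\lambda_i}\le\lambda$ for $i\ge 2$, then Cauchy--Schwarz, then Parseval's identity ($\sum_i \alpha_i^2 = \|\mathbf{1}_S\|_2^2 = \abs{S}$). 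Substituting into the displayed identity and rearranging gives the claimed inequality.

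I do not expect a genuine obstacle here; the only points demanding mild care are verifying that the quadratic-form identity correctly accounts for edges inside $S\cap T$, that the normalization $v_1 = \mathbf{1}/\sqrt{n}$ produces exactly the term $d\abs{S}\abs{T}/n$ (rather than a quantity off by a factor of $n$), and — most importantly — that one must discard the $i=1$ term \emph{before} invoking Cauchy--Schwarz, since otherwise the resulting bound would involve $d$ in place of $\lambda$ and be useless.
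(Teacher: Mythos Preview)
Your argument is correct and is exactly the standard spectral proof of the Expander Mixing Lemma. Note, however, that the paper does not actually prove this statement: it is quoted in the Preliminaries as a known tool and used as a black box, so there is no ``paper's own proof'' to compare against. Your write-up would serve perfectly well as a self-contained appendix proof; the caveats you flag (the quadratic-form identity matching the paper's directed-edge count $E(S,T)$, the normalization of $v_1$, and stripping off the $i=1$ term before Cauchy--Schwarz) are precisely the points one must get right, and you handle them correctly.
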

	Our results will rely on the following characterization of 1-bit non-malleable codes by Dziembowski, Kazana, and Obremski found in \cite{First}.
	\begin{theorem}\label{nflip_equal_nm}
		Let $(\enc, \dec)$ be a coding scheme with $\enc\colon \{0, 1\}\to \X$ and $\dec\colon \X\to\{0, 1\}$. Further, let $\F$ be a set of functions $f\colon \X\to \X$. Then $(\enc, \dec)$ is $\varepsilon$-non-malleable with respect to $\F$ if and only if for every $f\in \F$,
		\begin{align*} 
			\Pr_{b\xleftarrow u \{0,1\}}(\dec(f(\enc(b))) = 1-b) \leq \frac{1}{2}+\varepsilon,
		\end{align*}
		where the probability is over the uniform choice of $b$ and the randomness of $\enc$. 
	\end{theorem}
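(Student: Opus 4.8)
The plan is to unwind the simulation-based definition of $\varepsilon$-non-malleability and observe that for a single-bit message it collapses to a statement about two real numbers. Fix $f\in\F$ and set $p_0=\Pr(\dec(f(\enc(0)))=1)$ and $p_1=\Pr(\dec(f(\enc(1)))=1)$, the probabilities being over the randomness of $\enc$; the tampering experiment on message $m$ then outputs $1$ with probability $p_m$ and $0$ otherwise. On the ideal side, the simulator for $f$ is a distribution $D_f$ on $\{0,1\}\cup\{\mathtt{same}^*\}$ — a simulator valued in $\X\cup\{\mathtt{same}^*\}$ may be pushed forward through $\dec$ — say with weights $q_0,q_1,q_s$ on $0,1,\mathtt{same}^*$; since $\mathtt{same}^*$ is replaced by the genuine message $m$ before decoding, the ideal experiment on $m$ outputs $1$ with probability $q_1+q_s\cdot m$. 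The statistical distance between two distributions on $\{0,1\}$ equals the gap between the weights they assign to $1$, so $D_f$ witnesses $\varepsilon$-non-malleability for $f$ exactly when $\abs{p_0-q_1}\le\varepsilon$ and $\abs{p_1-q_1-q_s}\le\varepsilon$.

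The heart of the argument is the equivalence: a simulator $(q_0,q_1,q_s)$ as above exists if and only if $p_0-p_1\le 2\varepsilon$. The forward direction is immediate, since a valid simulator gives $p_0-\varepsilon\le q_1\le q_1+q_s\le p_1+\varepsilon$ using $q_s\ge 0$. For the converse I will exhibit a simulator by a short case analysis on the sign of $p_0-p_1$: when $p_0\le p_1$, the nonnegative weights $(q_0,q_1,q_s)=(1-p_1,\ p_0,\ p_1-p_0)$ sum to $1$ and reproduce both tampering experiments exactly; when $p_0>p_1$, one takes $q_s=0$ and $q_1=\max\{0,\,p_0-\varepsilon\}$ and verifies $\abs{p_0-q_1}\le\varepsilon$ and $\abs{p_1-q_1}\le\varepsilon$ from $p_1<p_0\le p_1+2\varepsilon$. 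It is worth pointing out that the $\mathtt{same}^*$ symbol is exactly what makes the condition one-sided: a simulator forbidden from copying would force $\abs{p_0-p_1}\le 2\varepsilon$, whereas legal copying of the message absorbs the regime $p_0\le p_1$ at no cost.

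A single line of bookkeeping then closes the loop, since $\Pr_{b\xleftarrow u \{0,1\}}(\dec(f(\enc(b)))=1-b)=\tfrac12 p_0+\tfrac12(1-p_1)=\tfrac12+\tfrac12(p_0-p_1)$, so the inequality $p_0-p_1\le 2\varepsilon$ is literally the displayed bound. Quantifying over $f$, being $\varepsilon$-non-malleable with respect to $\F$ — which is the statement that for every $f\in\F$ a valid simulator exists — is equivalent to the displayed inequality holding for every $f\in\F$. The only mildly delicate step is the converse half of the key equivalence, where one must produce a simulator with nonnegative weights summing to one in each case and, in particular, use the $\mathtt{same}^*$ outcome correctly; everything else is a direct computation.
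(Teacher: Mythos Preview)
The paper does not actually prove this theorem; it is quoted from Dziembowski, Kazana, and Obremski~\cite{First} as a black-box characterization, so there is no in-paper proof to compare against. Your argument is a correct, self-contained proof: the reduction to the two numbers $p_0,p_1$, the equivalence $p_0-p_1\le 2\varepsilon$, and the explicit simulators in both cases all check out. One small point worth making explicit is that the definition in the appendix allows $D_f$ to place mass on $\bot$, but since here $\dec$ maps into $\{0,1\}$ the real experiment never outputs $\bot$, so any such mass can only increase the statistical distance and may be discarded without loss; this is what justifies restricting to a three-point distribution $(q_0,q_1,q_s)$.
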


	\section{Results}
	We first formally introduce our candidate code and then prove that it is a non-malleable code.
	\subsection{Candidate Code}
	From a graph we can very naturally construct a coding scheme as follows.
	\begin{definition}[Graph Code]
		Let $G=(V, E)$ be a graph. The associated \emph{graph code}, $(\enc_G, \dec_G)$, consists of the functions
		\begin{align*}
			\enc_G&\colon \{0, 1\}\to V\times V,&
			\dec_G&\colon V\times V\to \{0, 1\}
		\end{align*}
		which are randomized and deterministic, respectively, and given by 
		\begin{align*}
			\enc_G(b) &= 
			\begin{cases}
				(u, v) \xleftarrow u (V\times V)\setminus E, & b=0,\\
				(u, v) \xleftarrow u E, & b=1,
			\end{cases}\\
			\dec_G(v_1, v_2) &= 
			\begin{cases}
				0, & (v_1, v_2)\not\in E,\\
				1, & (v_1, v_2)\in E.
			\end{cases}
		\end{align*}
	\end{definition}
	
	\subsection{Non-Malleability of Expander Graph Codes}
	Finally, arriving at the core of the matter, we first establish the following lemma casting the expression of Theorem \ref{nflip_equal_nm} in terms of graph properties.
	\begin{proposition}\label{ProbRepresentation}
		Let $G=(V, E)$ be  a graph, functions $g, h\colon V\to V$ be given, and $f=(g, h)\colon V\times V\to V \times V$ satisfy $f(u, v)=(g(u), h(v))$. For the probability that $f$ flips a random bit encoded by $\enc_G$, write 
		$$T = \Pr_{b\xleftarrow u\{0, 1\}}(\dec_G(f(\enc_G(b))) = 1-b)$$
		where the probability is taken over the randomness of $\enc_G$ and the sampling of $b$.
		Then
		\begin{align*}
			T &= \frac{1}{2}+\frac{1}{2d(n-d)} \sum_{(v, u)\in E}\left(\frac{d\abs{g^{-1}(v)}\cdot \abs{h^{-1}(u)}}{n}-\abs{E(g^{-1}(v), h^{-1}(u))}\right).
		\end{align*}
	\end{proposition}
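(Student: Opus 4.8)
The plan is to compute $T$ directly from the definition by conditioning on the encoded bit $b$. Since $b$ is uniform, $T = \frac{1}{2}\Pr(\dec_G(f(\enc_G(0))) = 1) + \frac{1}{2}\Pr(\dec_G(f(\enc_G(1))) = 0)$. The first term is the probability that, for $(u,v)$ uniform on $(V\times V)\setminus E$, the pair $(g(u),h(v))$ lands in $E$; the second is the probability that, for $(u,v)$ uniform on $E$, the pair $(g(u),h(v))$ lands outside $E$. Writing $\bar E = (V\times V)\setminus E$ so that $|\bar E| = n^2 - dn = n(n-d)$ and $|E| = dn$, these probabilities are $\frac{|\{(u,v)\in\bar E : (g(u),h(v))\in E\}|}{n(n-d)}$ and $\frac{|\{(u,v)\in E : (g(u),h(v))\notin E\}|}{dn}$ respectively.

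Next I would rewrite each numerator by grouping the domain points $(u,v)$ according to their images $(g(u),h(v)) = (w,x)$. For a fixed pair $(w,x)\in V\times V$, the number of $(u,v)$ with $g(u)=w$, $h(v)=x$ is exactly $|g^{-1}(w)|\cdot|h^{-1}(x)|$, and among those the number lying in $E$ is $|E(g^{-1}(w), h^{-1}(x))|$, hence the number lying in $\bar E$ is $|g^{-1}(w)|\cdot|h^{-1}(x)| - |E(g^{-1}(w),h^{-1}(x))|$. Summing the first term over $(w,x)\in E$ and the second over $(w,x)\notin E$ gives closed forms for both numerators. Let me abbreviate $a_{w,x} = |g^{-1}(w)|\cdot|h^{-1}(x)|$ and $e_{w,x} = |E(g^{-1}(w),h^{-1}(x))|$. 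Using $\sum_{(w,x)} a_{w,x} = n^2$ and $\sum_{(w,x)\in E} a_{w,x} = \sum_{(w,x)} a_{w,x} - \sum_{(w,x)\notin E} a_{w,x}$, and similarly $\sum_{(w,x)\in E} e_{w,x} = \sum_{\text{all}} e_{w,x} = |E| = dn$ (since every edge $(u,v)$ contributes to $e_{g(u),h(v)}$), I can eliminate the sums over the complement of $E$ in favor of sums over $E$.

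Carrying this out: the first term becomes $\frac{1}{2}\cdot\frac{1}{n(n-d)}\sum_{(w,x)\in E}(a_{w,x} - e_{w,x})$ after substituting $\sum_{(w,x)\notin E}(a_{w,x}-e_{w,x}) = (n^2 - dn) - \sum_{(w,x)\in E}(a_{w,x}-e_{w,x})$ into $\Pr(\dec_G(f(\enc_G(0)))=1) = \frac{1}{n(n-d)}\big[(n^2-dn) - \sum_{(w,x)\notin E}(a_{w,x}-e_{w,x})\big]$... wait, more carefully: $\Pr(\dec_G(f(\enc_G(0)))=1)$ counts $(u,v)\in\bar E$ with image in $E$, which is $\sum_{(w,x)\in E}(\text{number of }(u,v)\in\bar E\text{ with image }(w,x)) = \sum_{(w,x)\in E}(a_{w,x}-e_{w,x})$, divided by $n(n-d)$. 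Similarly $\Pr(\dec_G(f(\enc_G(1)))=0)$ counts $(u,v)\in E$ with image in $\bar E$, which is $\sum_{(w,x)\notin E}e_{w,x} = dn - \sum_{(w,x)\in E}e_{w,x}$, divided by $dn$. Adding the two weighted halves and collecting terms over a common denominator $2d(n-d)$, with the constant pieces combining to give the $\frac12$, should yield exactly the claimed expression $\frac{1}{2} + \frac{1}{2d(n-d)}\sum_{(v,u)\in E}\big(\frac{d|g^{-1}(v)|\,|h^{-1}(u)|}{n} - |E(g^{-1}(v),h^{-1}(u))|\big)$.

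The step I expect to require the most care is the bookkeeping of the constant terms and the verification that the coefficient of $a_{w,x}$ in the final sum is $\frac{d}{n}\cdot\frac{1}{2d(n-d)}$ rather than $\frac{1}{2d(n-d)}$; this comes from the mismatch between the denominators $n(n-d)$ and $dn$ in the two conditional probabilities, and it is precisely this asymmetry (the graph code is unbalanced) that produces the factor $\frac{d}{n}$ on the $a_{w,x}$ term. Everything else is a routine partition-of-the-domain argument and algebraic simplification; there is no inequality or expander property used here — Proposition \ref{ProbRepresentation} is a pure identity, and the Expander Mixing Lemma will only enter in the subsequent estimation of this sum.
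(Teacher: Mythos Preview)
Your approach is correct and essentially identical to the paper's: split $T=\tfrac12(Q_0+Q_1)$, express $Q_0$ by grouping non-edges according to their image in $E$ (giving $\sum_{(v,u)\in E}(a_{v,u}-e_{v,u})/n(n-d)$), express $Q_1$ via $\sum_{(v,u)\notin E}e_{v,u}=dn-\sum_{(v,u)\in E}e_{v,u}$, and combine over the common denominator $2d(n-d)$. The only cosmetic difference is that the paper presents the two conditional probabilities cleanly without the mid-proof self-correction; tightening that passage would make your write-up match the paper almost line for line.
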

	\begin{proof}
		For $b\in \{0, 1\}$ denote by $Q_b$ the probability
			$$Q_b = \Pr(\dec_G(f(\enc_G(b))) = 1-b)$$
		taken over the randomness of $\enc_G$. It is clear that $T=\frac{Q_0+Q_1}{2}$ and that by definition
		\begin{align*}
			Q_0 &= \Pr_{(v, u)\xleftarrow uV\times V\setminus E}\left[(g(v), h(u))\in E\right], &
			Q_1 &= \Pr_{(v, u)\xleftarrow u E}\left[(g(v), h(u))\not\in E\right].
		\end{align*}
		
		First, for $b=0$ we see that the number of non-edges that are mapped by $f$ to any given $(v, u)\in E$ is given by $\abs{g^{-1}(v)}\cdot \abs{h^{-1}(u)}-\abs{E(g^{-1}(v), h^{-1}(u))}$. There are $n(n-d)$ non-edges in $G$ so it follows that
		\begin{align*}
			Q_0 = \frac{\sum_{(v, u)\in E}\abs{g^{-1}(v)}\cdot \abs{h^{-1}(u)}-\abs{E(g^{-1}(v), h^{-1}(u))}}{n(n-d)}.
		\end{align*}
		
		Second, for $b=1$ the number of edges of $G$ that are mapped to non-edges by $f$ is given by $\sum_{(v, u)\not\in E}\abs{E(g^{-1}(v), h^{-1}(u))}$. Since there are $dn$ edges of $G$ to choose from when encoding the bit $b=1$, 
		\begin{align*}
			Q_1 = \frac{\sum_{(v, u)\not\in E}\abs{E(g^{-1}(v), h^{-1}(u))}}{dn}.
		\end{align*}
		Now, observing that the number of (directed) edges in the graph is $dn$ and 
		 that $\{g^{-1}(v)\}_{v\in V}$ and $\{h^{-1}(u)\}_{u\in V}$ are both partitions of $V$, we get
		\begin{align*}
			Q_1 &= \frac{dn-\sum_{(v, u)\in E}\abs{E(g^{-1}(v), h^{-1}(u))}}{dn} = 1-\frac{\sum_{(v, u)\in E}\abs{E(g^{-1}(v), h^{-1}(u))}}{dn}.
		\end{align*}
		\noindent
		Putting it all together,
		\begin{eqnarray*}
			T &=& \frac{\sum_{(v, u)\in E}\abs{g^{-1}(v)}\cdot \abs{h^{-1}(u)}-\abs{E(g^{-1}(v), h^{-1}(u))}}{2n(n-d)}  + \frac{1}{2}-\frac{\sum_{(v, u)\in E}\abs{E(g^{-1}(v), h^{-1}(u))}}{2dn}\\
			&=& \frac{1}{2} + \frac{1}{2d(n-d)} \sum_{(v, u)\in E}\left(\frac{d\abs{g^{-1}(v)}\cdot \abs{h^{-1}(u)}}{n}-\abs{E(g^{-1}(v), h^{-1}(u))}\right).
		\end{eqnarray*}
	\end{proof}
	\noindent
	We proceed immediately with the main theorem, which concludes the exposition. In order to keep this presentation short and to the point, more elaborate calculations, which save a few $\log$-factors, have been placed in the appendix as Theorem \ref{Thm-Elaborate}.
\begin{theorem}\label{mainTheorem}
		Let $G=(V, E)$ be $d$-regular with spectral expansion $\lambda$ satisfying $n = \Omega(d^3\log(d)^4/\lambda)$. Then $(\enc_G, \dec_G)$ is an $\tilde O\left(\frac{\lambda^{3/2}}{d}\right)$-non-malleable code in the split-state model.
	\end{theorem}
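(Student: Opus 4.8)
The plan is to combine the two cited reductions with a two-level expander-mixing argument. By Theorem~\ref{nflip_equal_nm}, non-malleability against split-state maps reduces to bounding, for every $f=(g,h)$, the quantity $T$ of Proposition~\ref{ProbRepresentation}; and by that proposition this is exactly the one-sided estimate
\[
\Sigma \;:=\; \sum_{(v,u)\in E}\left(\tfrac{d}{n}\,|g^{-1}(v)|\cdot|h^{-1}(u)|-|E(g^{-1}(v),h^{-1}(u))|\right)\;\le\;\tilde O\!\left(\lambda^{3/2}n\right).
\]
Only an \emph{upper} bound on $\Sigma$ is needed: a very negative $\Sigma$ just means the adversary flips the bit with probability well below $\tfrac12$, which is harmless (indeed the identity map gives $\Sigma=-d(n-d)$, so no two-sided bound of this strength even holds).

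Write $s_v=|g^{-1}(v)|$, $S_v=g^{-1}(v)$, $t_u=|h^{-1}(u)|$, $T_u=h^{-1}(u)$. Since $\{T_u\}_{u\in N(v)}$ are disjoint with union $W_v:=h^{-1}(N(v))$, I would first collapse the inner sum over $u\in N(v)$, obtaining $\Sigma=\sum_{v}\delta_v$ with $\delta_v=\tfrac{d}{n}s_v|W_v|-|E(S_v,W_v)|$, where $\sum_v s_v=n$ and $\sum_v|W_v|=dn$. The \emph{first} application of Theorem~\ref{mixing} is to the pair $(S_v,W_v)$ — or, via $|E(S_v,W_v)|=d\,s_v-|E(S_v,V\setminus W_v)|$, to $(S_v,V\setminus W_v)$, whichever side is smaller — giving the one-sided bound $\delta_v\le\lambda\sqrt{s_v\,\nu_v}$ with $\nu_v:=\min(|W_v|,\,n-|W_v|)$. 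Alongside it I would keep the trivial bound $\delta_v\le\tfrac{d}{n}s_v|W_v|$ (from $|E(S_v,W_v)|\ge0$), which is the stronger of the two exactly on ``small'' cells, $s_v\nu_v\lesssim(\lambda n/d)^2$. Summing $\delta_v\le\lambda\sqrt{s_v\nu_v}$ directly and applying Cauchy--Schwarz gives only $\Sigma\le\lambda n\sqrt{d}$, and a Frobenius-norm estimate of the associated trace gives only $\Sigma\le O(\lambda^2 n)$; both fall short of the target, so the two bounds on $\delta_v$ must be interleaved.

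To interleave them I would decompose the vertices by the dyadic magnitude of $s_v$ (classes $V_k$ with $2^k\le s_v<2^{k+1}$, so $|V_k|\le n/2^k$) and likewise by the magnitude of $t_u$, and estimate the contribution of each pair of scales. Here the \emph{second} application of Theorem~\ref{mixing} enters, now at the level of the quotient: for a fixed scale pair it bounds the number of cells that are ``meta-edges'' — namely $|E(V_k,U_\ell)|$ — by $\tfrac{d}{n}|V_k||U_\ell|+\lambda\sqrt{|V_k||U_\ell|}$. Multiplying this count by the per-cell bound — the trivial bound on small scales, the mixing-lemma bound on large scales, and the complement trick to suppress scales on which $|W_v|$ is close to $n$ — and summing over the (polylogarithmically many) relevant scale pairs should yield $\Sigma\le\tilde O(\lambda^{3/2}n)$, provided the various lower-order error terms accumulated along the way are dominated by $\lambda^{3/2}n$; this is precisely what the hypothesis $n=\Omega(d^3\log(d)^4/\lambda)$ guarantees.

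The crux is this last step. A single, un-nested appeal to expansion is provably too weak — it yields only the $\lambda^2/d$ or $\lambda/\sqrt d$ bounds noted above — so one must genuinely interleave the two mixing-lemma applications (the inner one controlling $|E(S_v,T_u)|$ cell by cell, the outer one controlling how many cells at a given scale can be meta-edges carrying no internal edge) and choose the scale cutoffs so that the ``small-cell'' contribution (bounded trivially, hence sensitive to $\sum_{(v,u)\in E}s_vt_u$ being large) and the ``large-cell'' contribution (bounded by mixing, hence sensitive to how many such cells there are) both land safely below $\lambda^{3/2}n$. Threading the logarithmic factors through this balancing — and, in the appendix, shaving them to the cleaner $n=\Omega(d^3\log(d)/\lambda)$, $O(\lambda^{3/2}/d)$ statement — is where the bulk of the remaining bookkeeping lies.
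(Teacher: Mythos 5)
Your proposal is essentially the paper's own argument: reduce via Theorem~\ref{nflip_equal_nm} and Proposition~\ref{ProbRepresentation} to a one-sided bound on the deficiency sum, split preimages dyadically by size, then apply the Expander Mixing Lemma twice in a nested way --- once cell-by-cell to $|E(g^{-1}(v),\cdot)|$ and once at the meta-level to $|E(G_1^k,H_1^l)|$ --- and sum over scale pairs using the exponential decay in $|k-l|$. The only real deviation is organizational: the paper does not collapse to $W_v=h^{-1}(N(v))$ globally (and so never needs the complement trick $\nu_v=\min(|W_v|,n-|W_v|)$); instead it first fixes the scale pair $(k,l)$ and collapses only over $u\in N(v)\cap H_1^l$, which automatically keeps the right-hand set small and lets $|G_1^k|\le 2^k$, $|H_1^l|\le 2^l$ do the rest of the work. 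Likewise, rather than interleaving the trivial and mixing bounds scale-by-scale as you sketch, the paper separates off all preimages of size $\le n/d^2$ up front and bounds that part trivially by $O(1/d)$, reserving the dyadic machinery for the remaining $O(\log d)$ scales; this is a cleaner cut than your $s_v\nu_v\lesssim(\lambda n/d)^2$ threshold but amounts to the same balancing.
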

	\begin{proof}
		Let $f=(g, h)\colon V\times V\to V\times V$ be given. By Theorem \ref{nflip_equal_nm} and Proposition \ref{ProbRepresentation} we just need to show that
		\begin{align*}
			R = \frac{1}{2d(n-d)}\cdot \sum_{(v, u)\in E}\left(\frac{d\abs{g^{-1}(v)}\cdot \abs{h^{-1}(u)}}{n}-\abs{E(g^{-1}(v), h^{-1}(u))}\right)
		\end{align*}
		is bounded by $\tilde O\left( \frac{\lambda^{3/2}}{d} \right)$. Define the sets
		\begin{align*}
			G_1 &= \left\{v\in V\mid \abs {g^{-1}(v)}> \frac{n}{d^2}\right\}, & H_1 &=\left\{u\in V\mid \abs {h^{-1}(u)}> \frac{n}{d^2}\right\},\\
			G_2 &= \left\{v\in V\mid \abs {g^{-1}(v)}\leq \frac{n}{d^2} \right\}, & H_2 &=\left\{u\in V\mid \abs {h^{-1}(u)}\leq\frac{n}{d^2}\right\},
		\end{align*}
		for $i, j\in \{1, 2\}$ write
		\begin{align*}
			R_{i, j} = \frac{1}{2d(n-d)} \sum_{(v, u)\in E\cap (G_i\times H_j)}\left(\frac{d\abs{g^{-1}(v)}\cdot \abs{h^{-1}(u)}}{n}-\abs{E(g^{-1}(v), h^{-1}(u))}\right),
		\end{align*}
		and observe that $R = \sum_{1\leq i, j\leq 2}R_{i, j}$. 
		
		Consider the case when $i=2$. Simply bounding the terms of the form $\abs{g^{-1}(v)}\cdot \abs{h^{-1}(u)}$ by using that each vertex has only $d$ neighbours, we get
		\begin{align*}
			R_{2, 1}+R_{2, 2} &\leq \frac{1}{2n(n-d)} \sum_{(v, u)\in E\cap (G_2\times V)}\abs{g^{-1}(v)}\cdot \abs{h^{-1}(u)}\\
			&\leq \frac{1}{2n(n-d)}\cdot d \cdot \sum_{u\in V}\frac{n}{d^2}\cdot \abs{h^{-1}(u)}= \frac{n}{2(n-d)d}.\\
		\end{align*}
		Thus, 
			$R_{2, 1}+R_{2, 2} = O\left(d^{-1}\right)$.
		By symmetry, $R_{1, 2} = O\left(d^{-1}\right)$. It only remains to show that $R_{1,1}=\tilde O\left(\frac{\lambda^{3/2}}{d}\right)$. 
		To this end, partition $G_1$ and $H_1$, respectively, as
		\begin{align*}
			G_1^k &= \left\{ v\in G_1\mid \frac{n}{2^{k-1}}\geq \abs{g^{-1}(v)}>\frac{n}{2^{k}} \right\},&
			H_1^l &= \left\{ v\in H_1\mid \frac{n}{2^{l-1}}\geq \abs{h^{-1}(u)}>\frac{n}{2^{l}} \right\}
		\end{align*}
		for $1\leq k, l\leq \ceil{\log_2\left(d^2\right)}$. Now, focusing on each pair $G_1^k$ and $H_1^l$, we write
		\begin{align*}
			S_{k, l} = \frac{1}{2d(n-d)} \sum_{(v, u)\in E\cap (G_1^k\times H_1^l)}\left(\frac{d\abs{g^{-1}(v)}\cdot \abs{h^{-1}(u)}}{n}-\abs{E(g^{-1}(v), h^{-1}(u))}\right)
		\end{align*}
		and apply first the mixing lemma then the Cauchy-Schwartz inequality to get
		\begin{eqnarray*}
			2d(n-d)S_{k, l} 
			&=& \sum_{v\in G_1^k}\left( \frac{d\abs{g^{-1}(v)}\cdot\sum_{u\in N(v)\cap H_1^l}\abs{h^{-1}(u)}}{n}-\abs{E\left( g^{-1}(v), \bigcup_{u\in N(v)\cap H_1^l}h^{-1}(u) \right)} \right)\\
			&\leq& \sum_{v\in G_1^k}\lambda\sqrt{\abs{g^{-1}(v)}\cdot\sum_{u\in N(v)\cap H_1^l}\abs{h^{-1}(u)}}\\
			&\leq& \lambda \sqrt{\frac{n}{2^{k-1}}\cdot \frac{n}{2^{l-1}}}\cdot \sum_{v\in G_1^k}\sqrt{\abs{N(v)\cap H_1^l}}\\
			&\leq& 2\lambda n\cdot 2^{-\frac{l+k}{2}}\cdot\sqrt{\abs{G_1^k}}\cdot \sqrt{\abs{E(G_1^k, H_1^l)}}.
		\end{eqnarray*}
		We use the fact that $\abs{G_1^k}\leq 2^k, \abs{H_1^l}\leq 2^l$, apply the mixing lemma to the last factor, and wield Jensen's inequality on the arising square root to obtain
		\begin{align*}
			d(n-d)S_{k, l}&\leq \lambda n\cdot 2^{-\frac{l+k}{2}}\cdot \sqrt{\abs{G_1^k}}\cdot \sqrt{\frac{d\cdot\abs{G_1^k}\cdot \abs{H_1^l}}n+\lambda\sqrt{\abs{G_1^k}\cdot \abs{H_1^l}}}\\
			&\leq \lambda\sqrt{2^k dn}+2^{\frac{k-l}4}\lambda^{3/2}n \leq \lambda\cdot \sqrt{d^3n}+2^{\frac{k-l}4}\lambda^{3/2}n.
		\end{align*}
		By symmetry of $k$ and $l$, $d(n-d)S_{k, l} \leq \lambda\cdot \sqrt{d^3n}+2^{\frac{l-k}4}\lambda^{3/2}n$. Thus,
		\begin{align*}
			R_{1, 1}&=\sum_{1\leq k, l\leq \ceil{\log_2(d^2)}}S_{k, l}\\
			&\leq O\left(\frac{\lambda\log(d)^2\cdot\sqrt d}{\sqrt{n}}\right) + O\left(\frac{\lambda^{3/2}}{d}\right)\cdot \sum_{1\leq k, l\leq \ceil{\log_2(d^2)}}2^{-\frac{\abs{k-l}}4}\\
			&= O\left( \frac{\log(d)\lambda^{3/2}}{d} \right).
		\end{align*}
	\end{proof}
	
	    \section*{Acknowledgements}
        A significant effort was made to simplify our proof as much as possible, which eventually resulted in the approximately 2-page proof of our main result presented here; we thank Anders Aamand and Jakob B\ae k Tejs Knudsen for suggestions and insights regarding the main theorem that helped simplify and improve the results presented. Furthermore, we thank Aayush Jain, Yuval Ishai, and Dakshita Khurana for early discussions regarding simple constructions of split-state non-malleable codes not based on expander graphs.
        
        Research supported in part from a DARPA/ARL SAFEWARE award, NSF Frontier Award 1413955, and NSF grant 1619348, BSF grant 2012378, a Xerox Faculty Research Award, a Google Faculty Research Award, an equipment grant from Intel, and an Okawa Foundation Research Grant. This material is based upon work supported by the Defense Advanced Research Projects Agency through the ARL under Contract W911NF-15-C- 0205. The views expressed are those of the authors and do not reflect the official policy or position of the Department of Defense, the National Science Foundation, or the U.S. Government.
        
        Research supported in part by grant 16582, Basic Algorithms Research Copenhagen (BARC), from the VILLUM Foundation.
	
	\bibliographystyle{alpha}
	\bibliography{biblio}
	\newpage
	\appendix
	
	\section{Definitions for Split-State Non-Malleable Codes}
	
	Here, we recall the basic definition of a split-state
	non-malleable code due to~\cite{NM}.
	
	 \begin{definition}[Coding scheme]
	 	We define a \emph{coding scheme} to be a pair of functions $(\enc, \dec)$. The encoding function $\enc\colon \M\to \X$ is randomized while the decoding function $\dec\colon \X\to \M\cup \{\bot\}$ is deterministic. Further, for all $s\in \M$ the pair satisfies 
	 	\begin{align*}
	 		\Pr[\dec(\enc(s))=s] = 1
	 	\end{align*}
	 	where the probability is taken over the randomness of $\enc$.
	 \end{definition}

	\begin{definition}[Split State Non-Malleable Code]
	 	A coding scheme $(enc, \dec)$, $\enc\colon \M\to \LL\times\RR$ and $\dec\colon \LL\times\RR\to \M\cup \{\bot\}$,  is $\varepsilon$-non-malleable in the split state model if for every pair of functions $g\colon \LL\to\LL, h\colon \RR\to\RR$ and writing $f=(g, h)$ there exists a distribution $D_f$ supported on $\M\cup\{*, \bot\}$ such that for every $s\in \M$ the two random variables defined by the experiments
	 	\begin{align*}
	 		A_f^s &= \left\{ \substack{(L, R)\leftarrow \enc(s); \\ \text{Output } \dec(g(L), h(R)) } \right\}\\
	 		B_f^s &= \left\{ \substack{ \tilde s \leftarrow D_f; \\ \text{If } \tilde s = * \text{ output } s \text{ else output } \tilde s }\right\}
	 	\end{align*}
	 	have statistical distance at most $\varepsilon$.
	 \end{definition}

	\section{Deliver Us from Log Factors}
	A more thorough analysis of the sums in the proof of Theorem \ref{mainTheorem} allows us to get slightly better bounds. The technicalities are of little interest to the big picture and were hence omitted in the body of the paper. The addition consists of an alternative ending to the proof of Theorem \ref{mainTheorem}.
    \begin{theorem}\label{Thm-Elaborate}
            Let $G=(V, E)$ be $d$-regular with spectral expansion $\lambda$ satisfying $n = \Omega(d^3\log(d)/\lambda)$. Then $(\enc_G, \dec_G)$ is an $O\left(\frac{\lambda^{3/2}}{d}\right)$-non-malleable code in the split-state model.
    \end{theorem}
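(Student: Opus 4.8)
\textbf{Proof proposal for Theorem~\ref{Thm-Elaborate}.}
The plan is to re-run the proof of Theorem~\ref{mainTheorem} verbatim up through the bound on $R_{2,1}+R_{2,2}$, $R_{1,2}$ (each $O(1/d)$, hence within the target $O(\lambda^{3/2}/d)$ once we note $\lambda \le d$), and the per-block estimate
\[
 d(n-d)S_{k,l} \;\le\; \lambda\sqrt{d^3 n} \;+\; 2^{\frac{k-l}{4}}\lambda^{3/2} n,
\]
together with its $k \leftrightarrow l$ symmetric companion with exponent $2^{(l-k)/4}$. The only thing that must change is the final summation over the $\lceil \log_2(d^2)\rceil^2$ pairs $(k,l)$: the crude bound in the body of the paper pays a $\log(d)^2$ factor on the first term (from counting all pairs) and a $\log(d)$ factor on the second (from the diagonal band), and these are what force the extra $\log$ factors in both the hypothesis on $n$ and the conclusion.

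For the \emph{second} term the geometric decay is already enough: $\sum_{k,l} 2^{-|k-l|/4}$ is $O(\log d)$ because there are $O(\log d)$ values of the difference and for each the count of pairs is $O(\log d)$ but the weight along a fixed difference $\delta$ is $2^{-\delta/4}$ — summing $\sum_{\delta} (\text{number of pairs with that difference})\,2^{-\delta/4}$ still gives $O(\log d)$, not $O(1)$. To remove this last $\log$ I would instead not split into $S_{k,l}$ symmetrically but go back one step: sum $\sum_l$ of the bound $2\lambda n\,2^{-(l+k)/2}\sqrt{|G_1^k|}\sqrt{|E(G_1^k,H_1^l)|}$ over all $l$ first, using Cauchy--Schwarz across $l$ to pull the $\sum_l |E(G_1^k,H_1^l)| \le |E(G_1^k, V)| \le d|G_1^k| \le d\,2^k$ bound out, so that $\sum_l 2d(n-d)S_{k,l} \le 2\lambda n \sqrt{|G_1^k|}\big(\sum_l 2^{-l}\big)^{1/2}\big(\sum_l 2^{-l}|E(G_1^k,H_1^l)|\big)^{1/2}\cdot(\dots)$ — more carefully, bound $\sum_l 2^{-(l+k)/2}\sqrt{|E(G_1^k,H_1^l)|} \le 2^{-k/2}\big(\sum_l 2^{-l}\big)^{1/2}\big(\sum_l |E(G_1^k,H_1^l)|\big)^{1/2} = O(2^{-k/2}\sqrt{d\,2^k}) = O(\sqrt d)$, which kills the $l$-sum entirely with no log, leaving $\sum_k d(n-d)S_{k,l}$-type terms of size $O(\lambda n \sqrt d)$ that then sum over $k$ with only a single $\log d$; applying the same trick in the $k$ direction (using the symmetric bound) removes that one too. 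The analogous treatment of the first term $\lambda\sqrt{d^3 n}$: summing over all $(k,l)$ naively gives $\log(d)^2 \lambda\sqrt{d^3n}$, but using the sharper $\sqrt{2^k d n}$ form of that term before relaxing $2^k \le d^2$, the sum $\sum_k \sqrt{2^k}$ telescopes geometrically to $O(\sqrt{2^{\log d^2}}) = O(d)$, so $\sum_{k,l}\lambda\sqrt{2^k dn} = O(\log d \cdot \lambda d \sqrt{dn}) = O(\log d \cdot \lambda\sqrt{d^3 n})$, and one more geometric summation in $l$ (there is a hidden $2^{-l}$ once one keeps the $|H_1^l|$ factor rather than bounding it by $2^l$) removes the remaining $\log d$.

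Putting these together the worst term becomes $O\!\big(\lambda\sqrt{d^3/n}\big) + O(\lambda^{3/2}/d)$, and the hypothesis $n = \Omega(d^3\log(d)/\lambda)$ — note: only one log, and in fact one could get away with $n = \Omega(d^3/\lambda)$ for this term, the log is swallowed into the constant — makes the first summand $O(\sqrt{\lambda}\cdot\sqrt{\lambda/\log d}) = O(\lambda/\sqrt{\log d}) = O(\lambda^{3/2}/d)$ provided also $\lambda \le d$, which holds since $\lambda \le d$ for any $d$-regular graph. Hence $R = R_{1,1} + (\text{the } O(1/d) \text{ terms}) = O(\lambda^{3/2}/d)$, and Theorem~\ref{nflip_equal_nm} with Proposition~\ref{ProbRepresentation} finishes it.

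\textbf{Main obstacle.} The delicate point is the order of the Cauchy--Schwarz applications in the double sum: one must Cauchy--Schwarz \emph{within} each dyadic level (over $l$ for fixed $k$, then over $k$) rather than treating $S_{k,l}$ as an opaque block, so that the edge-count $\sum_l |E(G_1^k,H_1^l)| \le d|G_1^k|$ — which is exactly the statement that each vertex has degree $d$ — can be used to collapse an entire geometric-in-$2^{-l}$ sum to $O(1)$ instead of $O(\log d)$. Getting the bookkeeping of which $2^{-k}, 2^{-l}, |G_1^k| \le 2^k, |H_1^l| \le 2^l$ bounds to keep versus relax, so that every nuisance log is absorbed by a convergent geometric series, is the whole content of the appendix; everything else is copied unchanged from the proof of Theorem~\ref{mainTheorem}.
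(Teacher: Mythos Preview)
Your proposal has a genuine gap. The step where you ``go back one step'' to the bound
\[
2d(n-d)S_{k,l}\;\le\;2\lambda n\cdot 2^{-(l+k)/2}\sqrt{|G_1^k|}\,\sqrt{|E(G_1^k,H_1^l)|}
\]
and Cauchy--Schwarz over $l$ using $\sum_l|E(G_1^k,H_1^l)|\le d|G_1^k|$ \emph{bypasses the second application of the expander mixing lemma}, which is exactly what supplies the density factor $1/n$ via $|E(G_1^k,H_1^l)|\le d|G_1^k||H_1^l|/n+\lambda\sqrt{|G_1^k||H_1^l|}$. If you carry your computation through honestly (you silently dropped the factor $\sqrt{|G_1^k|}$ that sits outside the $l$-sum) you obtain $\sum_l 2d(n-d)S_{k,l}\le O(\lambda n\sqrt d)\cdot|G_1^k|\,2^{-k/2}$, and then $\sum_k|G_1^k|2^{-k/2}\le\sum_k 2^{k/2}=O(d)$, so $R_{1,1}=O(\lambda\sqrt d)$, which is not even bounded by $1$. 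No reshuffling of Cauchy--Schwarz across $k,l$ rescues this: with only the degree bound $|E(G_1^k,H_1^l)|\le d\min(|G_1^k|,|H_1^l|)$ the best attainable is $R_{1,1}=O(\lambda\log d/\sqrt d)$, still nowhere near $O(\lambda^{3/2}/d)$. Separately, your closing arithmetic is wrong on its own terms: $O(\lambda/\sqrt{\log d})=O(\lambda^{3/2}/d)$ would require $d\le C\sqrt{\lambda\log d}\le C\sqrt{d\log d}$, which is false for large $d$; the condition $\lambda\le d$ does not help.

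What the paper does instead is \emph{retain} the second mixing lemma (hence the two-term split of $S_{k,l}$) and exploit a structural fact you never use: since every $v\in G_1^k$ has $|g^{-1}(v)|>n/2^k$, one has $|g^{-1}(G_1^k)|\ge n|G_1^k|/2^k$, and disjointness of the $g^{-1}(G_1^k)$ forces the ``volume constraint'' $\sum_k|G_1^k|2^{-k}\le 1$ (likewise for $H_1^l$). For the first term this gives $\sum_k|G_1^k|2^{-k/2}\le 2d$ outright, and a single Cauchy--Schwarz over $l$ leaves one factor $\sqrt{\log d}$, precisely matched by the hypothesis $n=\Omega(d^3\log d/\lambda)$. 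For the second term one rewrites $2^{-(l+k)/2}(|G_1^k|^3|H_1^l|)^{1/4}\le n^{-1}2^{(k-l)/4}(|g^{-1}(G_1^k)|^3|h^{-1}(H_1^l)|)^{1/4}$, groups by $t=l-k$, applies H\"older with exponents $(4/3,4)$ along each diagonal to bound the inner sum by $n$, and then sums the convergent geometric series $\sum_t 2^{-t/4}=O(1)$. The missing idea in your plan is this H\"older-against-the-preimage-partition step, not a rearranged Cauchy--Schwarz on raw edge counts.
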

    \begin{proof}
         At the very end of the proof of Theorem \ref{mainTheorem}, we arrived at
        \begin{align*}
            d(n-d)S_{k, l}\leq 2^{-\frac{l+k}2}\lambda n\cdot \sqrt{\abs{G_1^k}}\cdot \sqrt{\frac{d\cdot \abs{G_1^k}\cdot \abs{H_1^l}}{n}+\lambda\cdot \sqrt{\abs{G_1^k}\cdot \abs{H_1^l}}}.
        \end{align*}
        Applying Jensen's inequality, we get
        \begin{align}\label{eq-SKL}
            S_{k, l}&\leq O\left(\frac{\lambda}{\sqrt{dn}}\right)\cdot 2^{-\frac{l+k}2}\cdot \abs{G_1^k}\cdot \sqrt{\abs{H_1^l}}+O\left( \frac{\lambda^{3/2}}{d} \right)\cdot 2^{-\frac{l+k}2}\cdot \sqrt[4]{\abs{G_1^k}^3\cdot \abs{H_1^l}}
        \end{align}
        with the functions hidden by the $O$-notation being independent of $k, l$.

        Now, note that 
        \begin{align}\label{gh-ineqs}
            \abs{g^{-1}(G_1^k)}&\geq \frac{n\cdot\abs{G_1^k}}{2^k} & \abs{h^{-1}(H_1^l)}&\geq \frac{n\cdot\abs{H_1^l}}{2^l}
        \end{align}
         and for all $k\leq \ceil{\log_2(d^2)}$ we have $\frac{\abs{G_1^k}}{2^{k/2}}\leq 2d$. We shall bound each of the terms of \eqref{eq-SKL} separately.
        
        First,
        write 
        \begin{align*}
            L = \sum_{1\leq k, l\leq \ceil{\log_2(d^2)}} \left(2^{-\frac{l+k}2}\cdot \abs{G_1^k}\cdot \sqrt{\abs{H_1^l}}\right).
        \end{align*}
        Using the Cauchy-Schwartz inequality in the second inequality,
        \begin{align*}
            L
            &\leq 2d\cdot \sum_{1\leq l\leq\ceil{\log_2(d^2)}}\sqrt{2^{-l}\abs{H_1^l}}\\
            &\leq O\left(d\cdot \sqrt{\log(d)}\right)\cdot \sqrt{ \sum_{1\leq l\leq\ceil{\log_2(d^2)}} 2^{-l}\cdot \abs{H_1^l}}\\
            &\leq O\left(d\cdot \sqrt{\log(d)}\right)\cdot \sqrt{ \sum_{1\leq l\leq\ceil{\log_2(d^2)}} \frac{\abs{h^{-1}(H_1^l)}}{n}}\\
            &=O\left(d\cdot \sqrt{\log(d)}\right)
        \end{align*}
        since the $H_1^l$ are disjoint subsets of $V$. In conclusion, 
        \begin{align*}
            O\left(\frac{\lambda}{\sqrt{dn}}\right)\cdot \sum_{1\leq k, l\leq \ceil{\log_2(d^2)}} 2^{-\frac{l+k}2}\cdot \abs{G_1^k}\cdot \sqrt{\abs{H_1^l}}
            &=O\left(\frac{\lambda \cdot\sqrt{d\cdot log(d)}}{\sqrt n}\right)\\ & = O\left(\frac{\lambda^{3/2}}d  \right).
        \end{align*}
        
        Second, let $k\leq l$ and write $t=l-k$. We now bound the sum using \eqref{gh-ineqs}. Write
        \begin{equation*}
            K = \sum_{1\leq k< l\leq \ceil{\log_2(d^2)}} 2^{-\frac{l+k}2}\cdot \sqrt[4]{\abs{G_1^k}^3\cdot \abs{H_1^l}}.
        \end{equation*}
        Then
        \begin{align*}
            K&\leq \sum_{1\leq k< l\leq \ceil{\log_2(d^2)}} \left(\frac{2^{\frac{k-l}4}}n\cdot \sqrt[4]{\abs{g^{-1}(G_1^k)}^3\cdot \abs{h^{-1}(H_1^l)}}\right)\\
            &\leq \sum_{t=0}^{\ceil{\log_2(d^2)}}\left(\frac{2^{-\frac{t}4}}n\sum_{l=t}^{\ceil{\log_2(d^2)}}\sqrt[4]{\abs{g^{-1}(G_1^{l-t})}^3\cdot \abs{h^{-1}(H_1^l)}}\right)\\
            &\leq \sum_{t=0}^{\ceil{\log_2(d^2)}}\left(\frac{2^{-\frac{t}4}}n\left(\sum_{l=t}^{\ceil{\log_2(d^2)}}\abs{g^{-1}(G_1^{l-t})}\right)^{3/4}\cdot \left(\sum_{l=t}^{\ceil{\log_2(d^2)}}\abs{h^{-1}(H_1^{l})}\right)^{1/4}\right)\\
            &\leq \sum_{t=0}^{\ceil{\log_2(d^2)}} 2^{-\frac{t}4} = O(1),
        \end{align*}
        where the third inequality is established using H\"older's inequalty.

        It now follows that 
        \begin{align*}
        		\sum_{1\leq k\leq l\leq \ceil{\log_2(d^2)}}S_{k, l} = O\left( \frac{\lambda^{3/2}}{d} \right).
        \end{align*}
        By symmetry of $k$ and $l$,
        \begin{align*}
            R_{1, 1} = \sum_{1\leq k, l\leq \ceil{\log_2(d^2)}}S_{k, l} = O\left( \frac{\lambda^{3/2}}{d} \right),
        \end{align*}	
        which completes the proof.
    \end{proof}

	\section{Instantiating Our Construction}
	Using our results to instantiate an efficient, secure split-state non-malleable code, we require a family of graphs $\{G_k\}_{k\in \N}$, where each $G_k=(V_k, E_k)$ is $d_k$-regular with spectral expansion $\lambda_k$, satisfying the following:
	\begin{enumerate}
		\item The function $\varepsilon(k) = \frac{\lambda_k^{3/2}}{d_k}$ is negligible.
		\item We have $n_k=\abs{V(G_k)} = \Omega(d_k^3\log(d_k)/\lambda_k)$
	    \item Both sampling an edge $(u, v)\xleftarrow u E_k$ and sampling a non-edge $(u, v)\xleftarrow u (V_k\times V_k)\setminus E_k$ can be done in time polynomial in $k$.
	    \item Determining membership of a pair $(u, v)\in V\times V$ in $E(G_k)$ can be done deterministically in time polynomial in $k$.
	\end{enumerate}
	Given such a family of graphs it is clear that the corresponding graph code $(\enc_{G_k}, \dec_{G_k})$ is an efficiently computable non-malleable code.
	\subsection{Instantiation with High-Degree Cayley Graphs}
        Explicit constructions of such families of graphs do indeed exist. We shall here give an example from~\cite{Luca-Blog} 
        from the class of graphs known as Cayley graphs. The construction is as follows.
        
        \begin{definition}
            For $p$ a prime and $1\leq t<p$ let the graph $\LD_{p, t}$ have vertex set $\FF_p^{t+1}$ and edge set
            \begin{align*}
                E(\LD_{p, t}) = \left\{(x, x+(b, ab, a^2b, \dots, a^tb)) \mid x\in \FF_p^{t+1}, a, b\in \FF_p\right\},
            \end{align*}
            i.e.\ $x, y\in V(\LD_{p, T})$ are connected by an edge if and only if there exists $a, b\in \FF_p$ such that $y = x+(b, ab, a^2b, \dots, a^tb)$.
        \end{definition}
        It is worth nothing that the graph $\LD_{p, t}$ is $\LD_{p, t}$ is $p^2$-regular and that it is undirected as $x$ is connected to $y$ if and only if $y$ is connected to $x$.
        \\

        Now, let $t=5$ and for each $k\in \N$ let $p_k$ be some $k$-bit prime. We consider the family of graphs $\{\LD_{p_k, 5}\}_{k\in \N}$ for our instantiation. In the following, we shall check the criteria from the beginning of the section point by point. 
        
        \begin{enumerate}
            \item The family of graphs $\LD_{p, t}$ has great expander properties.  
        \begin{theorem}[Trevisan \cite{Luca-Blog}]
            For $1<t<p$, the graph $\LD_{p, t}$ is a $pt$-spectral expander.
        \end{theorem}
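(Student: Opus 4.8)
The plan is to exploit the fact that $\LD_{p,t}$ is the Cayley graph of the abelian group $\FF_p^{t+1}$, so that its entire spectrum can be written down exactly via additive character sums. Writing $c_a := (1, a, a^2, \dots, a^t) \in \FF_p^{t+1}$, the connection multiset is $S = \{\!\{\, b\cdot c_a \mid a, b \in \FF_p \,\}\!\}$, which has $p^2$ elements (matching $p^2$-regularity) and is symmetric because $b \mapsto -b$ sends $S$ to itself; hence $\LD_{p,t}$ is undirected and its adjacency matrix is $A_G = \sum_{s \in S} T_s$, where $T_s$ is the permutation matrix of translation by $s$ on $\FF_p^{t+1}$ (the $p$ pairs with $b=0$ contribute $p$ self-loops at each vertex, which the analysis below will handle automatically).

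First I would diagonalize. Since $\FF_p^{t+1}$ is abelian, its $p^{t+1}$ additive characters $\chi_y \colon x \mapsto \omega^{\langle x, y\rangle}$, with $\omega = e^{2\pi i/p}$ and $\langle x, y\rangle = \sum_{i=0}^t x_i y_i$, form a simultaneous eigenbasis of all the $T_s$, and therefore of $A_G$; the eigenvalue attached to $\chi_y$ is $\mu_y = \sum_{s \in S}\chi_y(s) = \sum_{a,b\in\FF_p}\omega^{\,b\,P_y(a)}$, where $P_y(X) = \sum_{i=0}^t y_i X^i \in \FF_p[X]$.

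The key simplification is to do the inner sum over $b$ first: for any $c \in \FF_p$, $\sum_{b \in \FF_p}\omega^{bc}$ equals $p$ when $c = 0$ and $0$ otherwise, so $\mu_y = p \cdot \#\{a \in \FF_p : P_y(a) = 0\}$. For $y = 0$ the polynomial $P_0$ is identically zero and $\mu_0 = p^2 = d$. For $y \ne 0$, $P_y$ is a nonzero polynomial of degree at most $t$, hence has at most $t$ roots in $\FF_p$, so $0 \le \mu_y \le pt$; moreover, since $t < p$, a nonzero polynomial of degree $\le t$ cannot vanish on all of $\FF_p$, so in fact $\mu_y < p^2$ strictly for every $y \ne 0$. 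Thus every eigenvalue of $A_G$ lies in $[0, pt] \cup \{p^2\}$, the value $p^2$ occurs with multiplicity exactly one, and consequently $\max\{\abs{\lambda_2}, \dots, \abs{\lambda_n}\} \le pt$, which is exactly the claim.

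There is no serious obstacle here; the computation is elementary once the Cayley-graph viewpoint is adopted. The only points that require a little care are (i) setting up $S$ as a multiset so that $\LD_{p,t}$ is genuinely $p^2$-regular and $A_G$ equals the stated sum of translation matrices, (ii) invoking the character theory of finite abelian groups to obtain the common eigenbasis and the identity $\mu_y = \sum_{s\in S}\chi_y(s)$, and (iii) pairing the root bound $\#\{a : P_y(a)=0\} \le t$ with the strict inequality $\mu_y < p^2$ for $y \ne 0$, the latter being what forbids a second eigenvalue equal to the degree. (If one preferred to avoid characters, one could instead bound an even power of $\abs{\lambda_i}$ by a trace computation counting closed walks, but the character-sum route is both shorter and yields the exact spectrum.)
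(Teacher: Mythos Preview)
The paper does not prove this theorem at all; it is quoted verbatim from Trevisan's blog post~\cite{Luca-Blog} and used as a black box in the instantiation section. Your argument is correct and is exactly the standard proof (and the one Trevisan gives): view $\LD_{p,t}$ as the Cayley multigraph of $\FF_p^{t+1}$ with connection multiset $\{b\,(1,a,\dots,a^t):a,b\in\FF_p\}$, diagonalize via additive characters to get $\mu_y=p\cdot\#\{a:P_y(a)=0\}$, and finish with the root bound $\deg P_y\le t$. The only point worth flagging is the multiset convention: the paper writes $E(\LD_{p,t})$ as a \emph{set} of ordered pairs yet simultaneously asserts $p^2$-regularity, which only holds if one counts the $p$ pairs $(a,0)$ as $p$ parallel self-loops; you made this explicit, which is the right way to reconcile the definition with the stated degree and with the eigenvalue computation.
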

        This fact allows us to note that for our particular choice of graphs, $\varepsilon(k) = \frac{(p_kt)^{3/2}}{p_k^2}<\frac{12}{\sqrt p_k}$, which in fact is $2^{-\Omega(k)}$ and the representation size is $O(k)$ bits. 
        \item We have $\Omega\left( \frac{d_k^3\log(d_k)}{\lambda_k} \right)=\Omega(p^5\log(p))$ such that indeed,
        $$n_k=\abs{V(\LD_{p_k, 5})} = p^6 = \Omega\left( \frac{d_k^3\log(d_k)}{\lambda_k} \right).$$
        \item Sampling an edge $(u, v)\xleftarrow u E(\LD_{p_k, t})$ is simply a question of picking $x\in \FF_{p_k}^{t+1}, a, b\in \FF_{p_k}$ uniformly at random and then outputting the edge $(x, x+(b, ab, a^2b, \dots, a^tb))$. 
        
        To pick a non-edge, simply sample two random vertices $u, v\in \FF_{p_k}^{t+1}$ uniformly at random and check (with the procedure to be specified below) whether $(u, v)\in E(\LD_{p_k, t})$. Since for $t>1$ the probability of hitting an edge with such a random choice is $\leq 1/{p_k}$, the expected number of repetitions is constant and hence the procedure takes expected polynomial time.
        \item To test membership of some $(u, v)\in \left(\FF_{p_k}^{t+1}\right)^2$ in $E(\LD_{p_k, t})$, perform the following operation: Compute $x=u-v$ and write $x=(x_0, \dots, x_t)$. It is now trival to check whether $\left(1, \frac{x_1}{x_0}, \dots, \frac{x_t}{x_0}\right)$ is of the form $(1, a, a^2, \dots, a^t)$.
        \end{enumerate}
        
\end{document}